\providecommand{\U}[1]{\protect\rule{.1in}{.1in}}
\newtheorem{theorem}{Theorem}
\newtheorem{definition}[theorem]{Definition}
\newtheorem{lemma}[theorem]{Lemma}
\newtheorem{problem}[theorem]{Problem}
\newtheorem{proposition}[theorem]{Proposition}
\newenvironment{proof}[1][Proof]{\noindent\textbf{#1.} }{\ \rule{0.5em}{0.5em}}
\begin{document}

\title{Impossibility of Succinct Quantum Proofs for Collision-Freeness}
\author{Scott Aaronson\thanks{MIT. \ Email: aaronson@csail.mit.edu. \ \ This material
is based upon work supported by the National Science Foundation under Grant
No. 0844626. \ Also supported by a DARPA YFA grant and a Sloan Fellowship.}}
\date{}
\maketitle

\begin{abstract}
We show that any quantum algorithm to decide whether a function $f:\left[
n\right]  \rightarrow\left[  n\right]  $ is a permutation or far from a
permutation\ must make $\Omega\left(  n^{1/3}/w\right)  $ queries to $f$, even
if the algorithm is given a $w$-qubit quantum witness in support of $f$ being
a permutation. \ This implies that there exists an oracle $A$ such that
$\mathsf{SZK}^{A}\not \subset \mathsf{QMA}^{A}$, answering an eight-year-old
open question of the author. \ Indeed, we show that relative to some oracle,
$\mathsf{SZK}$ is not in the counting class $\mathsf{A}_{\mathsf{0}%
}\mathsf{PP}$ defined by Vyalyi. \ The proof is a fairly simple extension of
the quantum lower bound for the collision problem.

\end{abstract}

\section{Introduction\label{INTRO}}

The \textit{collision problem} is to decide whether a black-box function
$f:\left[  n\right]  \rightarrow\left[  n\right]  $\ is one-to-one (i.e., a
permutation) or two-to-one function, promised that one of these is the case.
\ Together with its close variants, the collision problem is one of the
central problems studied in quantum computing theory; it abstractly models
numerous other problems such as graph isomorphism and the breaking of
cryptographic hash functions.

In this paper, we will mostly deal with a slight generalization of the
collision problem that we call the \textit{Permutation Testing Problem}, or
PTP. \ This is a property testing problem, in which we are promised that
$f:\left[  n\right]  \rightarrow\left[  n\right]  $ is either a permutation or
far from any permutation, and are asked to decide which is the case.

In 1997, Brassard, H\o yer, and Tapp \cite{bht} gave a quantum algorithm for
the collision problem that makes $O\left(  n^{1/3}\right)  $\ queries to $f$,
an improvement over the $\Theta\left(  \sqrt{n}\right)  $\ randomized query
complexity that follows from the birthday paradox. \ Brassard et al.'s
algorithm is easily seen to work for the PTP as well.

Five years later, Aaronson \cite{aar:col} proved the first non-constant lower
bound for these problems: namely, any bounded-error quantum algorithm to solve
them needs $\Omega\left(  n^{1/5}\right)  $\ queries to $f$. \ Aaronson and
Shi \cite{as} subsequently improved the lower bound to $\Omega\left(
n^{1/3}\right)  $, for functions $f:\left[  n\right]  \rightarrow\left[
3n/2\right]  $; then Ambainis \cite{ambainis:col} and Kutin \cite{kutin}
proved the optimal $\Omega\left(  n^{1/3}\right)  $\ lower bound for functions
$f:\left[  n\right]  \rightarrow\left[  n\right]  $. \ All of these lower
bounds work for both the collision problem and the PTP, though they are
slightly easier to prove for the latter.

The collision problem and the PTP are easily seen to admit Statistical
Zero-Knowledge (SZK) proof protocols. \ Thus, one consequence of the collision
lower bound was the existence of an oracle $A$ such that $\mathsf{SZK}%
^{A}\not \subset \mathsf{BQP}^{A}$.

In talks beginning in 2002,\footnote{See for example: \textit{Quantum Lower
Bounds}, www.scottaaronson.com/talks/lower.ppt; \textit{The Future (and Past)
of Quantum Lower Bounds by Polynomials},
www.scottaaronson.com/talks/future.ppt; \textit{The Polynomial Method in
Quantum and Classical Computing}, www.scottaaronson.com/talks/polymeth.ppt.}
the author often raised the following question:

\begin{quotation}
\noindent\textit{Suppose a function }$f:\left[  n\right]  \rightarrow\left[
n\right]  $\textit{\ is a permutation, rather than far from a permutation.
\ Is there a small (}$\operatorname*{polylog}\left(  n\right)  $%
\textit{-qubit) quantum proof }$\left\vert \varphi_{f}\right\rangle
$\textit{\ of that fact, which can be verified using }$\operatorname*{polylog}%
\left(  n\right)  $\textit{\ quantum queries to }$f$\textit{?}
\end{quotation}

In this paper, we will answer the above question in the negative. \ As a
consequence, we will obtain an oracle $A$ such that $\mathsf{SZK}%
^{A}\not \subset \mathsf{QMA}^{A}$. \ This implies, for example, that any
$\mathsf{QMA}$\ protocol for graph non-isomorphism would need to exploit
something about the problem structure beyond its reducibility to the collision problem.

Given that the relativized $\mathsf{SZK}$\ versus $\mathsf{QMA}$\ problem
remained open for eight years, our solution is surprisingly simple. \ We first
use the in-place amplification procedure of Marriott and Watrous \cite{mw} to
\textquotedblleft eliminate the witness,\textquotedblright\ and reduce the
question to one about quantum algorithms with extremely small acceptance
probabilities. \ We then use a relatively-minor adaptation of the polynomial
degree argument that was used to prove the original collision lower bound.
\ Our proof actually yields an oracle $A$\ such that $\mathsf{SZK}%
^{A}\not \subset \mathsf{A}_{\mathsf{0}}\mathsf{PP}^{A}$, where $\mathsf{A}%
_{\mathsf{0}}\mathsf{PP}$\ is a class defined by Vyalyi \cite{vyalyi} that
sits between $\mathsf{QMA}$\ and $\mathsf{PP}$.

Despite the simplicity of our result, to our knowledge it constitutes
\textit{the first nontrivial lower bound on }$\mathsf{QMA}$\textit{\ query
complexity},\ where \textquotedblleft nontrivial\textquotedblright\ means that
it doesn't follow immediately from earlier results unrelated to $\mathsf{QMA}%
$.\footnote{From the BBBV lower bound for quantum search \cite{bbbv}, one
immediately obtains an oracle $A$ such that $\mathsf{coNP}^{A}\not \subset
\mathsf{QMA}^{A}$: for if there exists a witness state $\left\vert
\varphi\right\rangle $\ that causes a $\mathsf{QMA}$\ verifier to accept the
all-$0$ oracle string, then that same $\left\vert \varphi\right\rangle $\ must
also cause the verifier to accept some string of Hamming weight $1$. \ Also,
since $\mathsf{QMA}\subseteq\mathsf{PP}$ relative to all oracles, the result
of Vereshchagin \cite{vereshchagin}\ that there exists an oracle $A$\ such
that $\mathsf{AM}^{A}\not \subset \mathsf{PP}^{A}$\ implies an $A$ such
that\ $\mathsf{AM}^{A}\not \subset \mathsf{QMA}^{A}$ as well.} \ We hope it
will serve as a starting point for stronger results in the same vein.

\section{Preliminaries\label{PRELIM}}

We assume familiarity with quantum query complexity, as well as with
complexity classes such as $\mathsf{QMA}$ (Quantum Merlin-Arthur),
$\mathsf{QCMA}$ (Quantum Merlin-Arthur with classical witnesses), and
$\mathsf{SZK}$ (Statistical Zero-Knowledge). \ See Buhrman and de Wolf
\cite{bw}\ for a good introduction to quantum query complexity, and the
Complexity Zoo\footnote{www.complexityzoo.com} for definitions of complexity classes.

We now define the main problem we will study. \

\begin{problem}
[Permutation Testing Problem or PTP]Given black-box access to a function
$f:\left[  n\right]  \rightarrow\left[  n\right]  $, and promised that either

\begin{enumerate}
\item[(i)] $f$ is a permutation (i.e., is one-to-one), or

\item[(ii)] $f$ differs from every permutation on at least $n/8$\ coordinates.
\end{enumerate}

The problem is to accept if (i) holds and reject if (ii) holds.
\end{problem}

In the above definition, the choice of $n/8$\ is arbitrary; it could be
replaced by $cn$\ for any $0<c<1$.

As mentioned earlier, Aaronson \cite{aar:col}\ defined the collision problem
as that of deciding whether $f$\ is \textit{one-to-one or two-to-one},
promised that one of these is the case. \ In this paper, we are able to prove
a $\mathsf{QMA}$ lower bound for PTP, but not for the original collision problem.

Fortunately, however, most of the desirable properties of the collision
problem carry over to PTP. \ As an example, we now observe a simple
$\mathsf{SZK}$\ protocol for PTP.

\begin{proposition}
\label{szkprop}PTP has an (honest-verifier) Statistical Zero-Knowledge proof
protocol, requiring $O\left(  \log n\right)  $\ time and\ $O\left(  1\right)
$\ queries to $f$.
\end{proposition}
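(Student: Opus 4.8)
The plan is to reduce PTP to the problem of distinguishing statistically close from statistically far distributions, and then invoke the result of Sahai and Vadhan that Statistical Difference is complete for honest-verifier $\mathsf{SZK}$. To $f$ I associate the distribution $D_{f}$ obtained by choosing $x\in\left[n\right]$ uniformly at random and outputting $f\left(x\right)$; a single sample from $D_{f}$ costs one query to $f$ and $O\left(\log n\right)$ time. Let $U$ be the uniform distribution on $\left[n\right]$, which is sampled with no queries. The heart of the matter is a combinatorial lemma computing the total variation distance $\left\Vert D_{f}-U\right\Vert$. Writing $r_{y}=\left\vert f^{-1}\left(y\right)\right\vert$, one has $\left\Vert D_{f}-U\right\Vert=\frac{1}{2n}\sum_{y}\left\vert r_{y}-1\right\vert$. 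Since $\sum_{y}\max\left(r_{y}-1,0\right)$ equals the number of values missing from the image of $f$, which in turn equals the minimum number of coordinates one must change to turn $f$ into a permutation, an elementary calculation gives $\left\Vert D_{f}-U\right\Vert=m/n$, where $m$ is the Hamming distance from $f$ to the nearest permutation. Hence $\left\Vert D_{f}-U\right\Vert=0$ when $f$ is a permutation, while $\left\Vert D_{f}-U\right\Vert\geq1/8$ when $f$ satisfies promise (ii).

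This places the problem squarely in the format of the Statistical Difference Problem: the verifier must decide whether $D_{f}$ and $U$ are identical or at least $1/8$-far. Because the ``far'' threshold $1/8$ and the ``close'' threshold $0$ satisfy the hypothesis of the Sahai--Vadhan polarization lemma (indeed $0<\left(1/8\right)^{2}$), and because $\mathsf{SZK}$ is closed under complement, accepting the close case has an honest-verifier $\mathsf{SZK}$ protocol. The point requiring attention is the resource count: polarizing a fixed constant gap up to constant completeness and soundness error requires only $O\left(1\right)$ independent copies of $D_{f}$ and $U$, so the verifier and the simulator each draw only $O\left(1\right)$ samples from $D_{f}$, giving $O\left(1\right)$ queries to $f$; all remaining bookkeeping is performed on $O\left(\log n\right)$-bit strings and costs $O\left(\log n\right)$ time.

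The main obstacle I anticipate is not the combinatorial lemma, which is routine, but confirming that the generic $\mathsf{SZK}$ protocol for Statistical Difference can be run with only $O\left(1\right)$ queries rather than $\operatorname*{polylog}\left(n\right)$. One must check that the oracle $f$ enters only through sampling $D_{f}$, that the number of such samples remains constant once the initial constant gap is fixed, and that the polarization and comparison steps are entirely oracle-free. Granting this, both the protocol and its simulator inherit the claimed $O\left(\log n\right)$ time and $O\left(1\right)$ query bounds, completing the proof.
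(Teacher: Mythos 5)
There is a genuine gap, and it sits exactly where you flagged it and then waved it through. Your combinatorial lemma is fine: with $r_{y}=\left\vert f^{-1}\left(y\right)\right\vert$ one indeed gets $\left\Vert D_{f}-U\right\Vert =\frac{1}{2n}\sum_{y}\left\vert r_{y}-1\right\vert =m/n$ where $m$ is the Hamming distance to the nearest permutation, so the promise translates to ``distance $0$ versus distance $\geq 1/8$.'' The problem is the direction of the Sahai--Vadhan machinery. The constant-sample honest-verifier protocol for Statistical Difference (verifier flips $b$, samples $D_{b}$, prover guesses $b$) proves \emph{farness}; but PTP must \emph{accept} when the distributions are close. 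You invoke closure of $\mathsf{SZK}$ under complement to fix the direction, but Okamoto's complement closure (equivalently, the Sahai--Vadhan reversal, which routes through Entropy Difference, flattening, and hashing-based set-size subprotocols) is a polynomial-resource theorem: the reversed distributions are built from polynomially many copies of the originals, measured in the bit-length of their outputs. Here each output is $\log n$ bits, so the generic complement protocol consumes $\operatorname*{polylog}\left(n\right)$ samples of $D_{f}$ --- i.e., $\operatorname*{polylog}\left(n\right)$ queries and $\operatorname*{polylog}\left(n\right)$ time --- violating both of the claimed bounds of $O\left(1\right)$ queries and $O\left(\log n\right)$ time. No generic constant-sample protocol for the complement of Statistical Difference is known, so ``granting this'' grants away the entire difficulty. (The polarization step is a red herring: with constant thresholds it is cheap, but it was never the bottleneck.)

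The paper sidesteps all of this with a direct, structure-specific protocol: the verifier picks $x\in\left[n\right]$ uniformly at random, sends $f\left(x\right)$ to the prover, and accepts if and only if the prover returns $x$. This is honest-verifier zero-knowledge because the verifier already knows $x$; completeness is perfect since a permutation can always be inverted; and if $f$ is $n/8$-far from every permutation, then with probability at least $1/8$ the chosen $x$ lands in a block where $f\left(x\right)$ has at least two preimages, conditioned on which the prover (who sees only $f\left(x\right)$) returns $x$ with probability at most $1/2$. Note that this protocol exploits more than sample access to $D_{f}$: the verifier's knowledge of the seed $x$, together with the prover's ability to invert, certifies near-injectivity in a single query --- precisely the leverage that the black-box reduction to Statistical Difference discards. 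If you want to salvage your approach, you would need either a new constant-sample protocol for certifying statistical \emph{closeness}, or (far more simply) the paper's inversion protocol, which renders the completeness machinery unnecessary.
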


\begin{proof}
The protocol is the following: to check that $f:\left[  n\right]
\rightarrow\left[  n\right]  $\ is one-to-one, the verifier picks an input
$x\in\left[  n\right]  $\ uniformly at random, sends $f\left(  x\right)  $ to
the prover, and accepts if and only if the prover returns $x$. \ Since the
verifier already knows $x$, it is clear that this protocol has the
zero-knowledge property.

If $f$\ is a permutation, then the prover can always compute $f^{-1}\left(
f\left(  x\right)  \right)  $, so the protocol has perfect completeness.

If $f$ is $n/8$-far from a permutation, then with at least $1/8$\ probability,
the verifier picks an $x$ such that $f\left(  x\right)  $\ has no unique
preimage, in which case the prover can find $x$\ with probability at most
$1/2$. \ So the protocol has constant soundness.
\end{proof}

\subsection{Upper Bounds\label{UPPER}}

To build intuition, we now give a simple $\mathsf{QMA}$\ \textit{upper} bound
for the collision problem. \ Indeed, this will actually be a $\mathsf{QCMA}%
$\ upper bound, meaning that the witness is classical, and only the
verification procedure is quantum.

\begin{theorem}
\label{upperbound}For all $w\in\left[  0,n\right]  $, there exists a
$\mathsf{QCMA}$\ protocol for the collision problem---i.e., for verifying that
$f:\left[  n\right]  \rightarrow\left[  n\right]  $\ is one-to-one rather than
two-to-one---that uses a $w\log n$-bit classical witness and makes $O\left(
\min\left\{  \sqrt{n/w},n^{1/3}\right\}  \right)  $ quantum queries to $f$.
\end{theorem}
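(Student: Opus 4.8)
The plan is to exhibit two separate protocols and, for each $w$, run whichever makes fewer queries; their minimum will give the claimed bound $O\!\left(\min\{\sqrt{n/w},\,n^{1/3}\}\right)$. The first protocol simply ignores the witness and runs the Brassard--H\o yer--Tapp algorithm \cite{bht}, which decides the collision problem in $O(n^{1/3})$ queries. This already handles the regime $w\lesssim n^{1/3}$, where $\sqrt{n/w}\ge n^{1/3}$ and the witness cannot help. All the work therefore goes into the second protocol, which must use the witness to confirm one-to-one-ness in roughly $\sqrt{n/w}$ queries when $w$ is large.

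For the witness-based protocol I would have the prover send a list of about $w$ claimed input--output pairs $(x_1,y_1),\dots,(x_w,y_w)$, totaling $O(w\log n)$ classical bits, where the $y_i$ are required to be distinct (the verifier checks distinctness of the \emph{listed} values for free). The verifier then does two things. First, as a spot check, it picks a random index $i$, queries $f(x_i)$, and rejects unless $f(x_i)=y_i$; this costs one query and catches gross lies about the images. Second, it runs Grover search over the domain $[n]$, looking for an input $x\notin\{x_1,\dots,x_w\}$ with $f(x)\in\{y_1,\dots,y_w\}$ --- testing membership in the listed set is a free classical lookup, so each Grover iteration costs one query to $f$ --- and it rejects if it finds such an $x$, since $f(x)=y_i=f(x_i)$ with $x\neq x_i$ is a genuine collision. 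Completeness is immediate: if $f$ is one-to-one, the honest prover lists true, distinct images, the spot check always passes, and no external collision exists, so the Grover stage finds nothing and the verifier accepts after $O(\sqrt{n/w})$ iterations.

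The soundness analysis is the step I expect to be the main obstacle, and it is where the \emph{exact} two-to-one structure is essential: every value in the range of a two-to-one $f$ has exactly two preimages. If the prover reports honest images, then $f$ restricted to $\{x_i\}$ is injective, so each $x_i$ has a distinct partner $x_i'\neq x_i$ lying outside the witness set, giving roughly $w$ marked items for the Grover stage; this search then finds a collision in $O(\sqrt{n/w})$ queries and the verifier rejects. The dangerous attack is for the prover to ``pack'' complete collision pairs into the witness set, so that every witnessed image is covered twice from inside $\{x_i\}$ and the Grover stage sees no external partner. But two elements of a pair share a true image, so to keep the listed $y_i$ distinct the prover must \emph{lie} about at least one image in each packed pair; a constant fraction of the entries are then false, and the spot check rejects with probability bounded below by a constant. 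Thus every strategy is caught either by the Grover collision search or by the image spot check, and standard amplification by $O(1)$ repetitions drives the error below $1/3$ at the cost of a constant factor in the query count. I would organize the write-up so that this dichotomy --- evading the search forces lies, while avoiding lies leaves collisions to be found --- is stated and quantified first, with the routine Grover iteration count and amplification bounds deferred.
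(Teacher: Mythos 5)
Your proposal is correct and takes essentially the same approach as the paper: a Brassard--H\o yer--Tapp fallback for $w=O(n^{1/3})$, a witness consisting of claimed input--output values, an $O(1)$-query spot check of random entries, and a Grover search for a collision partner of a witnessed input, giving $O\left(\sqrt{n/w}\right)$ queries in total. The only differences are minor implementation choices---the paper fixes the witnessed inputs to be $1,\ldots,w$ and runs Grover over a random subset of $\{w+1,\ldots,n\}$ of size $n/w$ rather than over the full domain with $\Omega(w)$ marked items---and, if anything, your explicit distinctness requirement on the listed images, together with the ``packing'' analysis, spells out a soundness case (both members of a collision pair inside the witnessed set) that the paper's write-up handles only implicitly via its appeal to the Brassard et al.\ analysis.
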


\begin{proof}
If $w=O\left(  n^{1/3}\right)  $, then the verifier $V$\ can just ignore the
witness and solve the problem in $O\left(  n^{1/3}\right)  $\ queries\ using
the Brassard-H\o yer-Tapp algorithm \cite{bht}. \ So assume $w\geq Cn^{1/3}$
for some suitable constant $C$.

The witness will consist of claimed values $f^{\prime}\left(  1\right)
,\ldots,f^{\prime}\left(  w\right)  $\ for $f\left(  1\right)  ,\ldots
,f\left(  w\right)  $ respectively. \ Given this witness, $V$ runs the
following procedure.

\begin{enumerate}
\item[(Step 1)] Choose a set of indices $X\subset\left[  w\right]  $\ with
$\left\vert X\right\vert =O\left(  1\right)  $ uniformly at random. \ Query
$f\left(  x\right)  $\ for each $x\in X$, and reject if there is an $x\in X$
such that $f\left(  x\right)  \neq f^{\prime}\left(  x\right)  $.

\item[(Step 2)] Choose a set of indices $Y\subset\left\{  w+1,\ldots
,n\right\}  $ with $\left\vert Y\right\vert =n/w$\ uniformly at random. \ Use
Grover's algorithm to look for a $y\in S$\ such that $f\left(  y\right)
=f^{\prime}\left(  x\right)  $ for some $x\in\left[  w\right]  $. \ If such a
$y$\ is found, then reject; otherwise accept.
\end{enumerate}

Clearly this procedure makes $O\left(  \sqrt{n/w}\right)  $\ quantum queries
to $f$. \ For completeness, notice that if $f$\ is one-to-one, and the witness
satisfies $f^{\prime}\left(  x\right)  =f\left(  x\right)  $\ for all
$x\in\left[  w\right]  $, then $V$ accepts with probability $1$. \ For
soundness, suppose that Step 1 accepts. \ Then with high probability, we have
$f^{\prime}\left(  x\right)  =f\left(  x\right)  $\ for at least (say) a
$2/3$\ fraction of $x\in\left[  w\right]  $. \ However, as in the analysis of
Brassard et al.\ \cite{bht}, this means that, if $f$\ is two-to-one, then with
high probability, a Grover search over $n/w$\ randomly-chosen indices
$y\in\left\{  w+1,\ldots,n\right\}  $\ will succeed at finding a $y$\ such
that $f\left(  y\right)  =f^{\prime}\left(  x\right)  =f\left(  x\right)  $
for some $x\in\left[  w\right]  $. \ So if Step 2 does \textit{not} find such
a $y$, then $V$ has verified to within constant soundness that $f$ is one-to-one.
\end{proof}

For the Permutation Testing Problem, we do not know whether there is a
$\mathsf{QCMA}$\ protocol that satisfies both $T=o\left(  n^{1/3}\right)
$\ and $w=o\left(  n\log n\right)  $. \ However, notice that if $w=\Omega
\left(  n\log n\right)  $, then the witness can just give claimed values
$f^{\prime}\left(  1\right)  ,\ldots,f^{\prime}\left(  n\right)  $ for
$f\left(  1\right)  ,\ldots,f\left(  n\right)  $\ respectively. \ In that
case, the verifier simply needs to check that $f^{\prime}$ is indeed a
permutation, and that $f^{\prime}\left(  x\right)  =f\left(  x\right)  $\ for
$O\left(  1\right)  $ randomly-chosen values $x\in\left[  n\right]  $. \ So if
$w=\Omega\left(  n\log n\right)  $, then the $\mathsf{QMA}$, $\mathsf{QCMA}$,
and $\mathsf{MA}$\ query complexities are all $T=O\left(  1\right)  $.

\section{Main Result\label{MAIN}}

In this section, we prove a lower bound on the $\mathsf{QMA}$\ query
complexity of the Permutation Testing Problem. \ Given a $\mathsf{QMA}%
$\ verifier $V$ for PTP, the first step will be to amplify $V$'s success
probability. \ For this, we use the by-now standard procedure of Marriott and
Watrous \cite{mw}, which amplifies without increasing the size of the quantum witness.

\begin{lemma}
[In-Place Amplification Lemma \cite{mw}]\label{inplace}Let $V$\ be a
$\mathsf{QMA}$\ verifier that uses a $w$-qubit quantum witness, makes $T$
oracle queries, and has completeness and soundness errors $1/3$. Then for all
$s\geq1$, there exists an amplified verifier $V_{s}^{\prime}$\ that uses a
$w$-qubit quantum witness, makes $O\left(  Ts\right)  $\ oracle queries, and
has completeness and soundness errors\ $1/2^{s}$.
\end{lemma}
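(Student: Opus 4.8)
The goal is to amplify a QMA verifier's completeness and soundness errors from $1/3$ to $1/2^s$ while keeping the witness register at exactly $w$ qubits. The naive approach—running $V$ in parallel on $k$ independent copies of the witness and taking a majority vote—fails here, because it would require a $kw$-qubit witness, and there is no guarantee that an optimal witness for $V$ is a tensor product of identical single-copy witnesses. The key insight of Marriott and Watrous is to amplify \emph{in place}, reusing the same $w$-qubit witness register across many rounds rather than demanding fresh copies.

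The plan is to analyze the two-outcome measurement that $V$ effectively performs on the witness. Concretely, I would view $V$ as follows: it appends ancilla qubits (including the oracle workspace) initialized to $\left\vert 0\right\rangle$, applies a unitary $U$ built from $V$'s gates and the $T$ oracle queries, and measures whether a designated output qubit reads $1$. Tracing out the ancillas, this whole process is equivalent to applying a single two-outcome POVM to the $w$-qubit witness $\left\vert \varphi\right\rangle$; write the "accept" operator as $M = \Pi_{0}\, U^{\dagger}\Pi_{\mathrm{acc}}U\,\Pi_{0}$ restricted to the witness space, where $\Pi_0$ projects the ancillas onto $\left\vert 0\right\rangle$. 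The acceptance probability on witness $\left\vert \varphi\right\rangle$ is then $\langle\varphi\vert M\vert\varphi\rangle$, and $M$ is a Hermitian operator with eigenvalues in $[0,1]$. In an eigenbasis of $M$, the yes-instance guarantee is that some eigenvalue exceeds $2/3$, while the no-instance guarantee is that \emph{every} eigenvalue is below $1/3$.

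The heart of the argument is a \emph{reflection}-based amplification that acts independently within each eigenspace of $M$. Following Marriott--Watrous, I would have $V_s'$ alternately apply the reflection $R_0 = 2\Pi_0 - I$ about the all-zero ancilla state and the reflection $R_{\mathrm{acc}} = 2\Pi_{\mathrm{acc}} - I$ about the accepting subspace, conjugated by $U$, for $O(s)$ iterations. Inside the two-dimensional invariant subspace associated with an eigenvalue $p$ of $M$, the product of these reflections is a rotation by an angle $\theta$ with $\cos^2(\theta/2) = p$ (or the analogous relation), exactly the setup of amplitude amplification. A final measurement, or equivalently an estimate of the rotation angle, distinguishes $p > 2/3$ from $p < 1/3$: after $O(s)$ rounds the two-outcome statistics separate the two cases with error at most $1/2^s$. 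Since each round uses $U$ and $U^\dagger$ a constant number of times, and each such call makes $T$ queries, the total is $O(Ts)$ queries; crucially, the witness register is never copied, so it remains $w$ qubits.

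The main obstacle I expect is bookkeeping the amplification so that it is simultaneously correct in \emph{both} the completeness and soundness cases using a single fixed procedure $V_s'$, since the verifier cannot know which eigenvalue of $M$ the witness overlaps. The resolution is precisely that the reflection dynamics act block-diagonally across eigenspaces: an adversarial prover in the no-instance case is confined to eigenvalues below $1/3$, and the rotation-angle gap between $p<1/3$ and $p>2/3$ is bounded below by a constant, so the same $O(s)$-round procedure drives the error exponentially down in both regimes. I would therefore not reprove amplitude amplification from scratch, but cite the Marriott--Watrous analysis for the eigenvalue-separation bound and simply verify the query count $O(Ts)$ and the invariance of the witness size, which are immediate from the in-place construction.
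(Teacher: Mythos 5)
Your proposal is correct and follows essentially the same route as the paper, which states Lemma \ref{inplace} without proof, citing Marriott and Watrous \cite{mw}: your reduction of $V$ to the Hermitian operator $M = \Pi_{0}\,U^{\dagger}\Pi_{\mathrm{acc}}U\,\Pi_{0}$ on the witness register, the block-diagonal reflection dynamics in the Jordan subspaces of $M$, and the $O(Ts)$ query count with an unchanged $w$-qubit witness are precisely the cited construction. The one point to phrase carefully is that the $2^{-s}$ error comes from thresholding the statistics of the $O(s)$ alternating measurement outcomes via a Chernoff-type bound (or, equivalently, from phase estimation on the product of reflections), not from a single terminal measurement after the rotations---but your mention of the \textquotedblleft two-outcome statistics\textquotedblright\ over the rounds, together with your explicit deferral to the analysis in \cite{mw}, covers this adequately.
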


Lemma \ref{inplace} has a simple consequence that will be the starting point
for our lower bound.

\begin{lemma}
[Guessing Lemma]\label{guesslem}Suppose a language $L$ has a $\mathsf{QMA}%
$\ protocol, which makes $T$\ queries and uses a $w$-qubit quantum witness.
\ Then there is also a quantum algorithm for $L$ (with no witness) that makes
$O\left(  Tw\right)  $ queries, accepts every $x\in L$ with probability at
least $0.9/2^{w}$, and accepts every $x\notin L$ with probability at most
$0.3/2^{w}$.
\end{lemma}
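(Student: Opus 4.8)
The plan is to start from the amplified verifier $V'_s$ given by the In-Place Amplification Lemma, with the amplification parameter $s$ chosen so that the completeness and soundness errors drop below a small constant multiple of $2^{-w}$; taking $s = w + O(1)$ will do, yielding a verifier with $O(Tw)$ queries whose errors are at most, say, $0.01 \cdot 2^{-w}$. The key idea is then to \emph{feed the maximally mixed state} $I/2^w$ on the $w$-qubit witness register into $V'_s$, which is equivalent to averaging the acceptance probability uniformly over a basis of $2^w$ orthogonal witness states. Because one of those basis states is (close to) the optimal accepting witness, this average picks up a $1/2^w$ contribution from the ``good'' case, which is exactly the source of the $\Omega(1)/2^w$ acceptance probability.

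More concretely, for $x \in L$ let $\ket{\varphi}$ be a witness causing $V'_s$ to accept with probability at least $1 - 0.01\cdot 2^{-w}$. Extend $\ket{\varphi}$ to an orthonormal basis $\ket{\varphi} = \ket{\psi_1}, \ket{\psi_2}, \dots, \ket{\psi_{2^w}}$ of the witness space. Running $V'_s$ on the mixed state $I/2^w = \frac{1}{2^w}\sum_i \ketbra{\psi_i}{\psi_i}$ produces acceptance probability $\frac{1}{2^w}\sum_i \Pr[V'_s \text{ accepts } \ket{\psi_i}]$, which is at least $\frac{1}{2^w}\Pr[V'_s \text{ accepts } \ket{\varphi}] \geq \frac{1}{2^w}(1 - 0.01\cdot 2^{-w}) \geq 0.9/2^w$. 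For $x \notin L$, soundness guarantees that \emph{every} witness $\ket{\psi_i}$ is accepted with probability at most $0.01\cdot 2^{-w} \leq 0.3/2^w$ (using $2^w \geq 1$), so the average is at most $0.3/2^w$ as well. Thus the witness-free algorithm is simply ``prepare $I/2^w$ in the witness register and run $V'_s$''; it makes $O(Tw)$ queries and has the claimed one-sided acceptance gap.

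The only mildly delicate point is the completeness bound for $x \in L$: one must verify that the constants work out, i.e. that $1 - 0.01 \cdot 2^{-w} \geq 0.9$ after dividing by $2^w$, which holds comfortably for all $w \geq 1$ (and the $w=0$ case is the trivial witnessless algorithm). I would also remark that preparing the maximally mixed state costs no queries and only $O(w)$ extra qubits, so the query count is dominated by the $O(Ts) = O(Tw)$ queries of $V'_s$. I do not expect any real obstacle here; the whole content is the observation that averaging over an orthonormal witness basis retains a $2^{-w}$ fraction of the good witness's acceptance probability while the soundness bound degrades only by a harmless constant factor. This ``guess the witness'' reduction is exactly what lets us replace a $\mathsf{QMA}$ protocol by a plain quantum algorithm with exponentially small but cleanly separated acceptance probabilities, setting up the subsequent polynomial-degree argument.
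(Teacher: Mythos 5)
Your proposal is correct and matches the paper's own proof essentially verbatim: both apply Marriott--Watrous in-place amplification with $s=w+O(1)$ (the paper takes $s:=w+2$) and then run the amplified verifier on the $w$-qubit maximally mixed state, so that completeness retains a $2^{-w}$ fraction of the good witness's acceptance probability while soundness holds for every state in the witness register. Your treatment of the constants is if anything slightly more careful than the paper's, which simply bounds $\left(1-2^{-s}\right)/2^{w}\geq 0.9/2^{w}$ and $2^{-s}\leq 0.3/2^{w}$.
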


\begin{proof}
Let $V_{s}^{\prime}$\ be the amplified verifier from Lemma \ref{inplace}.
\ Set $s:=w+2$, and consider running $V_{s}^{\prime}$\ with the $w$-qubit
maximally mixed state $I_{w}$ in place of the $\mathsf{QMA}$ witness
$\left\vert \varphi_{x}\right\rangle $. \ Then given any yes-instance $x\in
L$,%
\[
\Pr\left[  V_{s}^{\prime}\left(  x,I_{w}\right)  \text{ accepts}\right]
\geq\frac{1}{2^{w}}\Pr\left[  V_{s}^{\prime}\left(  x,\left\vert \varphi
_{x}\right\rangle \right)  \text{ accepts}\right]  \geq\frac{1-2^{-s}}{2^{w}%
}\geq\frac{0.9}{2^{w}},
\]
while given any no-instance $x\notin L$,%
\[
\Pr\left[  V_{s}^{\prime}\left(  x,I_{w}\right)  \text{ accepts}\right]
\leq\frac{1}{2^{s}}\leq\frac{0.3}{2^{w}}.
\]

\end{proof}

Now let $Q$ be a quantum algorithm for PTP, which makes $T$ queries to $f$.
\ Then just like in the collision lower bound proofs of Aaronson
\cite{aar:col}, Aaronson and Shi \cite{as}, and Kutin \cite{kutin}, the
crucial fact we will need is the so-called \textquotedblleft Symmetrization
Lemma\textquotedblright: namely, $Q$\textit{'s acceptance probability can be
written as a polynomial, of degree at most }$2T$\textit{, in a small number of
integer parameters characterizing }$f$\textit{.}

In more detail, call an ordered pair of integers $\left(  m,a\right)
$\ \textit{valid} if

\begin{enumerate}
\item[(i)] $0\leq m\leq n$,

\item[(ii)] $1\leq a\leq n-m$, and

\item[(iii)] $a$ divides $n-m$.
\end{enumerate}

Then for any valid $\left(  m,a\right)  $, let $S_{m,a}$\ be the set of all
functions $f:\left[  n\right]  \rightarrow\left[  n\right]  $\ that are
one-to-one on $m$\ coordinates and $a$-to-one on the remaining $n-m$%
\ coordinates (with the two ranges not intersecting, so that $\left\vert
\operatorname{Im}f\right\vert =m+\frac{n-m}{a}$). \ The following version of
the Symmetrization Lemma is a special case of the version proved by Kutin
\cite{kutin}.

\begin{lemma}
[Symmetrization Lemma \cite{aar:col,as,kutin}]\label{kutinlem}Let $Q$ be a
quantum algorithm that makes $T$ queries to $f:\left[  n\right]
\rightarrow\left[  n\right]  $. \ Then there exists a real polynomial
$p\left(  m,a\right)  $, of degree at most $2T$, such that%
\[
p\left(  m,a\right)  =\operatorname*{E}_{f\in S_{m,a}}\left[  \Pr\left[
Q^{f}\text{ accepts}\right]  \right]
\]
for all valid $\left(  m,a\right)  $.
\end{lemma}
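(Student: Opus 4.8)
The plan is to combine the standard polynomial representation of quantum query algorithms with a symmetrization over the symmetry group that preserves $S_{m,a}$. First I would encode $f:\left[n\right]\rightarrow\left[n\right]$ by the $n^{2}$ Boolean indicator variables $x_{i,j}=\left[f(i)=j\right]$, subject to the promise that for each $i$ exactly one $x_{i,j}$ equals $1$. By the polynomial method of Beals, Buhrman, Cleve, Mosca, and de Wolf, the amplitude of each basis state after $T$ queries is a polynomial of degree at most $T$ in the $x_{i,j}$, so the acceptance probability $\Pr\left[Q^{f}\text{ accepts}\right]$ is a real multilinear polynomial $P\left(x\right)$ of degree at most $2T$. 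Evaluating $P$ at the Boolean point corresponding to $f$ recovers the true acceptance probability, so it suffices to show that $\operatorname*{E}_{f\in S_{m,a}}\left[P\right]$ is a polynomial in $m$ and $a$ of degree at most $2T$.

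Second, I would exploit symmetry. The set $S_{m,a}$ is invariant under the action of $S_{n}\times S_{n}$ that permutes the domain and the range, sending $x_{i,j}\mapsto x_{\sigma(i),\tau(j)}$; indeed, applying $\tau\circ f\circ\sigma^{-1}$ merely permutes the one-to-one coordinates and the $a$-to-one blocks among themselves. Since the uniform distribution on $S_{m,a}$ is likewise invariant, averaging $P$ over this group does not change $\operatorname*{E}_{f\in S_{m,a}}\left[P\right]$ and only preserves or lowers the degree. Equivalently, I would group the monomials of $P$ into orbits under this action. Because the variables are multilinear and $x_{i,j}x_{i,j'}=0$ for $j\neq j'$, every surviving monomial of degree $d\leq 2T$ specifies a partial assignment $f(i_{1})=j_{1},\dots,f(i_{d})=j_{d}$ on $d$ distinct domain points, and its orbit (its \emph{type}) is determined by the pattern of coincidences among the target values $j_{1},\dots,j_{d}$.

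Third, I would compute the expectation type by type. For a fixed type in which the $d$ constraints hit $r$ distinct image values with multiplicities $c_{1},\dots,c_{r}$, a uniformly random $f\in S_{m,a}$ is consistent with the assignment with a probability I can write explicitly: each multiplicity-one value may sit in either the one-to-one part or an $a$-to-one block, while each multiplicity $c_{s}\geq 2$ forces those $c_{s}$ domain points into a common $a$-to-one block and requires $c_{s}\leq a$. Counting the functions in $S_{m,a}$ realizing this and dividing by $\left\vert S_{m,a}\right\vert$ gives a ratio of products of falling factorials in $n$, $m$, and the number of blocks $(n-m)/a$. Summing this probability, weighted by the number of monomials in the type and their coefficients, over all types yields $\operatorname*{E}_{f\in S_{m,a}}\left[P\right]$.

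The main obstacle is the degree bookkeeping. The per-type probabilities are a priori only \emph{rational} functions of $m$ and $a$, since they involve the block count $(n-m)/a$ in both numerators and denominators, so I must verify two things: that the apparent denominators cancel, leaving a genuine polynomial in $m$ and $a$, and that this polynomial has total degree at most the monomial degree $d\leq 2T$. I would handle this by writing each type's contribution in terms of falling factorials of $m$ (for the singleton images) and of the block count (for the repeated images), and checking that the factors of $a$ introduced when placing domain points within a block exactly cancel the denominators, so that a type touching $d$ domain points contributes degree at most $d$. Since $d\leq 2T$ for every monomial, summing over all types preserves the bound and yields the desired polynomial $p\left(m,a\right)$ of degree at most $2T$. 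This falling-factorial cancellation is precisely the technical heart of Kutin's argument.
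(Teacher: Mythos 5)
Your proposal is correct, but note that the paper itself gives no proof of this lemma---it is quoted as a special case of the version proved by Kutin \cite{kutin}, whose argument (building on Aaronson \cite{aar:col} and Aaronson--Shi \cite{as}) is exactly what you reconstruct: the Beals--Buhrman--Cleve--Mosca--de Wolf polynomial method on the indicator variables $x_{i,j}$, followed by a type-by-type expectation over $S_{m,a}$ in which the block count $(n-m)/a$ appears in falling factorials whose denominators of $a$ are cancelled by the factors of $a$ from placing points within blocks (e.g.\ $a^{u}(b)_{u}=\prod_{i=0}^{u-1}(n-m-ia)$ with $b=(n-m)/a$), giving each degree-$d$ monomial a contribution of degree at most $d$ in $(m,a)$. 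So your attempt matches the cited proof in both structure and the key technical step, and the degree bookkeeping you flag as the main obstacle does go through exactly as you describe.
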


Finally, we will need a standard result from approximation theory, due to
Paturi \cite{paturi}.

\begin{lemma}
[Paturi \cite{paturi}]\label{polylem}Let $q:\mathbb{R}\rightarrow\mathbb{R}%
$\ be a univariate polynomial such that $0\leq q\left(  j\right)  \leq\delta
$\ for all integers $j\in\left[  a,b\right]  $, and suppose that $\left\vert
q\left(  \left\lceil x\right\rceil \right)  -q\left(  x\right)  \right\vert
=\Omega\left(  \delta\right)  $\ for some $x\in\left[  a,b\right]  $.\ \ Then
$\deg\left(  q\right)  =\Omega\left(  \sqrt{\left(  x-a+1\right)  \left(
b-x+1\right)  }\right)  $.
\end{lemma}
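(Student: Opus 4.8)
The plan is to convert the hypothesis about a large \emph{jump} of $q$ between consecutive integers into a statement about a large \emph{derivative} of $q$ at an interior point, and then to invoke a Bernstein-type inequality, which bounds the derivative of a bounded polynomial in terms of its degree and its distance to the endpoints of the interval. First I would apply the mean value theorem on $[x,\lceil x\rceil]$: since $\lceil x\rceil-x\le 1$ and $\lvert q(\lceil x\rceil)-q(x)\rvert=\Omega(\delta)$, there is a point $\xi\in[x,\lceil x\rceil]$ at which $\lvert q'(\xi)\rvert\ge\lvert q(\lceil x\rceil)-q(x)\rvert=\Omega(\delta)$. Thus it suffices to show that a degree-$d$ polynomial cannot sustain a derivative of order $\delta$ at a point $\xi\approx x$ unless $d=\Omega(\sqrt{(x-a+1)(b-x+1)})$.

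The central tool is Bernstein's inequality, transplanted from $[-1,1]$ to $[a,b]$ via the affine map $t\mapsto(2t-(a+b))/(b-a)$. A short computation turns the standard bound $\lvert p'(y)\rvert\le d\,\lVert p\rVert_{[-1,1]}/\sqrt{1-y^2}$ into
\[
\lvert q'(\xi)\rvert\;\le\;\frac{d\,\lVert q\rVert_{[a,b]}}{\sqrt{(\xi-a)(b-\xi)}},
\]
where $\lVert q\rVert_{[a,b]}$ denotes the sup norm on the interval. Combining this with $\lvert q'(\xi)\rvert=\Omega(\delta)$ and $\xi\approx x$, and solving for $d$, yields $d=\Omega\!\big(\sqrt{(x-a)(b-x)}\cdot\delta/\lVert q\rVert_{[a,b]}\big)$. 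Provided $\lVert q\rVert_{[a,b]}=O(\delta)$, this is exactly the claimed bound; the additive $+1$'s in the statement simply absorb the boundary cases where $x$ lies within unit distance of an endpoint, where $\xi-a$ or $b-\xi$ must be replaced by a quantity of order $1$.

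The main obstacle is precisely the hypothesis demanded by Bernstein: we are told only that $0\le q(j)\le\delta$ at the \emph{integers} of $[a,b]$, not that $\lVert q\rVert_{[a,b]}=O(\delta)$. Indeed, since $q(\lceil x\rceil)\in[0,\delta]$ while $\lvert q(\lceil x\rceil)-q(x)\rvert=\Omega(\delta)$, the polynomial \emph{must} leave the band $[0,\delta]$ by $\Omega(\delta)$ somewhere in $(x,\lceil x\rceil)$, so no global sup-norm bound is free. I would bridge this gap with a grid-to-interval sup-norm estimate for polynomials bounded on equally spaced points --- an Ehlich--Zeller / Coppersmith--Rivlin type inequality --- which forces $\lVert q\rVert_{[a,b]}=O(\delta)$ whenever the degree lies below the relevant grid threshold ($d=O(\sqrt{b-a})$). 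I expect the hardest part to be the complementary behaviour: grid-boundedness fails to control $q$ between grid points exactly in the region where the Chebyshev extrema cluster (near the endpoints, for moderate $d$; only deep in the interior once $d$ grows), and the location-dependent exponent $\sqrt{(x-a+1)(b-x+1)}$ --- rather than a crude $\sqrt{b-a}$ --- emerges only from tracking \emph{where} the overshoot at $x$ sits relative to this degree-dependent clustering scale. Handling this interaction, presumably by localizing the sup-norm estimate to a sub-interval around $x$ whose length is matched to the Chebyshev node spacing and then feeding the result back into the Bernstein bound, is the step I would expect to require the most care.
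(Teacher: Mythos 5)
You have assembled the standard skeleton correctly---note that the paper itself gives no proof of this lemma, importing it wholesale from Paturi \cite{paturi}, so the comparison here is with the literature's argument rather than with anything in the text---and Paturi's proof does indeed run through the mean value theorem, the Bernstein inequality transplanted to $[a,b]$, and a grid-to-interval estimate; you have also correctly located the crux in the tension between boundedness at the integers and boundedness on the interval. But the localization you propose to resolve that tension cannot deliver the stated bound, and the shortfall lands exactly in the regime this paper needs. Quantitatively: on a window $[a',b']\ni x$ of length $L$ centered at $x$, the Ehlich--Zeller estimate applies only when $d=O(\sqrt{L})$, while the transplanted Bernstein inequality on that window bounds $\left\vert q'(\xi)\right\vert$ by $O(d\,\delta/L)$; combined with $\left\vert q'(\xi)\right\vert=\Omega(\delta)$ this forces $L=O(d)$, incompatible with the threshold $L=\Omega(d^{2})$ once $d$ exceeds a constant. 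So the only conclusion this extracts is that no admissible window fits inside $[a,b]$, i.e.\ $d=\Omega\bigl(\sqrt{\min\{x-a+1,\,b-x+1\}}\bigr)$; taking the full interval as the window instead yields the dichotomy $d=\Omega\bigl(\min\{\sqrt{b-a},\,\sqrt{(x-a+1)(b-x+1)}\}\bigr)$. Either way, your program caps out at $\Omega(\sqrt{b-a})$ when $x$ is central, whereas the lemma asserts $\Omega(b-a)$ there---a quadratic shortfall. And the central case at full strength is precisely what Theorem \ref{mainthm} uses: the polynomial $h$ has its jump at $n/(2c)$, at distance $\Theta(n/c)$ from both ends of $[n/(4c),\,n/c]$, and the argument needs $\deg(h)=\Omega(n/c)$; with only the square-root version, the final bound would degrade from $Tw=\Omega(n^{1/3})$ to $Tw=\Omega(n^{1/4})$.

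The missing idea is a \emph{confinement} statement valid far above the Ehlich--Zeller threshold: for all $d\leq\epsilon(b-a)$, a degree-$d$ polynomial bounded by $\delta$ in absolute value at the integers of $[a,b]$ satisfies $\left\vert q(t)\right\vert=O(\delta)$ at every $t$ whose distance from both endpoints is $\Omega(d^{2}/(b-a))$---that is, the between-grid-point overshoot, which genuinely exists once $d\gg\sqrt{b-a}$, is confined to end zones of width $O(d^{2}/(b-a))$ rather than being excluded globally. Given this, Bernstein on the shrunken interval yields $d=\Omega\bigl(\sqrt{(x-a)(b-x)}\bigr)$ whenever $x$ lies inside it; and if $x$ falls in an end zone, the inequality $\min\{x-a,\,b-x\}=O(d^{2}/(b-a))$ rearranges, using $b-a\geq\max\{x-a,\,b-x\}$, to the same bound directly. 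You name Coppersmith--Rivlin, but you invoke their machinery only in its weak global form (sup-norm control below the degree threshold $\sqrt{b-a}$), which is provably insufficient: their actual growth bound allows overshoot as large as $\exp\bigl(\Theta(d^{2}/(b-a))\bigr)\cdot\delta$ near the endpoints, so any route that feeds a \emph{global} sup norm into Bernstein becomes vacuous exactly where it is needed. Establishing the localized confinement estimate is the real content of Paturi's lemma, and your plan, as it stands, stops one genuine idea short of it.
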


Intuitively, Lemma \ref{polylem} says that $\deg\left(  q\right)
=\Omega\left(  \sqrt{b-a}\right)  $\ if $x$\ is close to one of the endpoints
of the range $\left[  a,b\right]  $, and that $\deg\left(  q\right)
=\Omega\left(  b-a\right)  $\ if $x$ is close to the middle of the range.

We can now prove the $\mathsf{QMA}$\ lower bound for PTP.

\begin{theorem}
[Main Result]\label{mainthm}Let $V$ be a $\mathsf{QMA}$\ verifier for the
Permutation Testing Problem, which makes $T$ quantum queries to the function
$f:\left[  n\right]  \rightarrow\left[  n\right]  $, and which takes a
$w$-qubit quantum witness $\left\vert \varphi_{f}\right\rangle $\ in support
of $f$ being a permutation. \ Then $Tw=\Omega\left(  n^{1/3}\right)  $.
\end{theorem}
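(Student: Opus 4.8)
The plan is to chain together the three lemmas already in place---the Guessing Lemma (Lemma \ref{guesslem}), the Symmetrization Lemma (Lemma \ref{kutinlem}), and Paturi's inequality (Lemma \ref{polylem})---so that the whole burden falls on the same approximation-theoretic argument that drives the original collision lower bound, with the witness size $w$ entering only through the query count. First I would eliminate the witness: applying Lemma \ref{guesslem} to the hypothesized verifier $V$ produces a witness-free quantum algorithm $Q$ that makes $T'=O(Tw)$ queries to $f$, accepts every permutation with probability at least $0.9/2^{w}$, and accepts every function that is $n/8$-far from a permutation with probability at most $0.3/2^{w}$. The point I would stress immediately is that, although both acceptance probabilities are now exponentially small in $w$, they differ by only a constant factor; it is this constant \emph{ratio}, and not the absolute scale, that the endgame will exploit.

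Next I would symmetrize. By Lemma \ref{kutinlem} there is a real polynomial $p(m,a)$ of degree at most $2T'=O(Tw)$ with $p(m,a)=\operatorname*{E}_{f\in S_{m,a}}[\Pr[Q^{f}\text{ accepts}]]$ on all valid $(m,a)$. I would then pass to a single variable by restricting to a one-parameter family interpolating between a permutation and a function far from a permutation---for instance the $a$-to-one functions $S_{0,a}$, giving $q(a):=p(0,a)$, or the family that is two-to-one on $2k$ coordinates and one-to-one on the rest, giving a polynomial in $k$. In either case $q$ has degree $O(Tw)$, and at the integer points where it equals a genuine (averaged) acceptance probability it lies in $[0,1]$; moreover $q$ takes a value at least $0.9/2^{w}$ at the point corresponding to permutations and at most $0.3/2^{w}$ at every integer point corresponding to a function that is $n/8$-far from a permutation.

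Finally I would invoke Lemma \ref{polylem} with $\delta\approx 1/2^{w}$: over an interval of far-from-permutation integer points $q$ stays bounded by $O(\delta)$, while near the permutation point it departs from that band by $\Omega(\delta)$, so Paturi's inequality forces $\deg(q)=\Omega(n^{1/3})$---exactly as in the collision lower bound, and crucially with the $2^{w}$ normalization cancelling, since Lemma \ref{polylem} is sensitive only to the ratio of the jump to $\delta$. Combining this with $\deg(q)=O(Tw)$ yields $Tw=\Omega(n^{1/3})$, as claimed.

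I expect the main obstacle to be this last step: reproducing the collision-type degree bound faithfully. One must choose the univariate restriction so that the integer points at which $q$ is provably small fill a dense enough interval (which requires coping with the divisibility condition $a\mid(n-m)$ built into the definition of a valid pair), and one must position the jump between the permutation value and the far-from-permutation values so that Paturi's inequality delivers the exponent $1/3$ rather than something weaker. The promise gap---functions at Hamming distance between $1$ and $n/8$ from a permutation, on which $Q$'s behavior is entirely uncontrolled---is precisely what makes this positioning delicate, and it is exactly where the adaptation of the Aaronson--Shi--Kutin argument has to do its work; everything preceding it is essentially bookkeeping.
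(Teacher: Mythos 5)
Your first two steps coincide exactly with the paper's proof: apply Lemma \ref{guesslem} to get a witness-free algorithm $Q$ making $O(Tw)$ queries with acceptance probabilities $\geq 3\varepsilon$ on permutations and $\leq \varepsilon$ on far-from-permutation functions (where $\varepsilon = 0.3/2^{w}$), then apply Lemma \ref{kutinlem} to get a degree-$O(Tw)$ polynomial $p(m,a)$, and your observation that only the \emph{ratio} of acceptance probabilities matters is precisely right. But the endgame as you propose it has a genuine gap: neither of your suggested univariate restrictions supports an application of Lemma \ref{polylem} at scale $\delta \approx 2^{-w}$. For $q(a) := p(0,a)$, Paturi's lemma requires $q$ to be small at \emph{every} integer in an interval, but $(0,a)$ is valid only when $a$ divides $n$, so you control $q$ only at the sparse set of divisors of $n$ and the hypothesis fails. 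For the family that is two-to-one on $2k$ coordinates and one-to-one on the rest, i.e.\ $h(k) := p(n-2k,2)$, the integer points $k \in (0, n/8)$ lie in the promise gap, where you know only $0 \leq h(k) \leq 1$ rather than $h(k) = O(2^{-w})$; so with $\delta = \Theta(2^{-w})$ the hypothesis of Lemma \ref{polylem} fails on those points, while with $\delta = 1$ the jump of size $2\varepsilon$ is nowhere near $\Omega(\delta)$. Moreover, even if these obstacles vanished, a single Paturi application with the jump at the endpoint of a length-$L$ interval yields only $\Omega(\sqrt{L})$; no single line through the permutation point delivers the exponent $1/3$ in one shot.

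The missing idea is a two-stage argument with a balancing parameter, and it is exactly what the paper supplies. Define $g(x) := p(n/2, 2x)$ and let $k$ be the \emph{least} positive integer with $\left\vert g(k)\right\vert > 2\varepsilon$. The bound $\left\vert g(i)\right\vert \leq 2\varepsilon$ for all $i \in [k-1]$ then holds \emph{by the definition of $k$}, not by validity of $(n/2, 2i)$ --- this is how the divisibility condition you flagged gets sidestepped --- and the jump between $g(1/2) = p(n/2,1) \geq 3\varepsilon$ and $g(1) = p(n/2,2) \leq \varepsilon$ gives $\deg(g) = \Omega(\sqrt{k})$ by Lemma \ref{polylem}. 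Next set $c := 2k$ and $h(i) := p(n-ci, c)$: every integer $i \in \left[ \frac{n}{4c}, \frac{n}{c}\right]$ yields a valid pair (since $c$ divides $n - (n-ci) = ci$ automatically) with $m = n - ci \leq 3n/4$ and $c \geq 2$, hence $0 \leq h(i) \leq \varepsilon$ on a dense interval of integers, while $h\left(\frac{n}{2c}\right) = p\left(\frac{n}{2}, 2k\right) = g(k) > 2\varepsilon$ sits in the \emph{middle} of that interval, so Lemma \ref{polylem} gives $\deg(h) = \Omega(n/c) = \Omega(n/k)$. Since both restrictions have degree at most $\deg(p) = O(Tw)$, balancing $\max\left\{ \sqrt{k}, n/k\right\} = \Omega(n^{1/3})$ finishes the proof. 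You correctly identified the positioning of the jump and the divisibility constraint as ``where the adaptation has to do its work,'' but your proposal leaves exactly that work undone, and the specific single-restriction plan it commits to cannot complete it.
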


\begin{proof}
Assume without loss of generality that $n$ is divisible by $4$. \ Let
$\varepsilon:=0.3/2^{w}$. \ Then by Lemma \ref{guesslem}, from the
hypothesized $\mathsf{QMA}$\ verifier $V$, we can obtain a quantum algorithm
$Q$ for the PTP that makes $O\left(  Tw\right)  $\ queries to $f$, and that
satisfies the following two properties:

\begin{enumerate}
\item[(i)] $\Pr\left[  Q^{f}\text{ accepts}\right]  \geq3\varepsilon$ for all
permutations $f:\left[  n\right]  \rightarrow\left[  n\right]  $.

\item[(ii)] $\Pr\left[  Q^{f}\text{ accepts}\right]  \leq\varepsilon$ for all
$f:\left[  n\right]  \rightarrow\left[  n\right]  $ that are at least
$n/8$-far from any permutation.
\end{enumerate}

Now let $p\left(  m,a\right)  $ be the real polynomial of degree $O\left(
Tw\right)  $ from Lemma \ref{kutinlem}, such that%
\[
p\left(  m,a\right)  =\operatorname*{E}_{f\in S_{m,a}}\left[  \Pr\left[
Q^{f}\text{ accepts}\right]  \right]
\]
for all valid $\left(  m,a\right)  $. \ Then $p$ satisfies the following two properties:

\begin{enumerate}
\item[(i')] $p\left(  m,1\right)  \geq3\varepsilon$ for all $m\in\left[
n\right]  $. \ (For any $f\in S_{m,1}$ is one-to-one on its entire domain.)

\item[(ii')] $0\leq p\left(  m,a\right)  \leq\varepsilon$ for all integers
$0\leq m\leq3n/4$\ and $a\geq2$ such that $a$ divides $n-m$. \ (For in this
case, $\left(  m,a\right)  $\ is valid and every $f\in S_{m,a}$\ is at least
$n/8$-far\ from a permutation.)
\end{enumerate}

So to prove the theorem, it suffices to show that any polynomial $p$
satisfying properties (i') and (ii') above has degree $\Omega\left(
n^{1/3}\right)  $.

Let $g\left(  x\right)  :=p\left(  n/2,2x\right)  $, and let $k$\ be the least
positive integer such that $\left\vert g\left(  k\right)  \right\vert
>2\varepsilon$ (such a $k$ must exist, since $g$\ is a non-constant
polynomial). \ Notice that\ $g\left(  1/2\right)  =p\left(  n/2,1\right)
\geq3\varepsilon$, that $g\left(  1\right)  =p\left(  n/2,2\right)
\leq\varepsilon$, and that $\left\vert g\left(  i\right)  \right\vert
\leq2\varepsilon$\ for all $i\in\left[  k-1\right]  $. \ By Lemma
\ref{polylem}, these facts together imply that $\deg\left(  g\right)
=\Omega\left(  \sqrt{k}\right)  $.

Now let $c:=2k$, and let $h\left(  i\right)  :=p\left(  n-ci,c\right)  $.
\ Then for all integers $i\in\left[  \frac{n}{4c},\frac{n}{c}\right]  $, we
have $0\leq h\left(  i\right)  \leq\varepsilon$, since $\left(  n-ci,c\right)
$\ is valid, $n-ci\leq3n/4$, and $c\geq2$. \ On the other hand, we also have%
\[
h\left(  \frac{n}{2c}\right)  =p\left(  \frac{n}{2},c\right)  =p\left(
\frac{n}{2},2k\right)  =g\left(  k\right)  >2\varepsilon.
\]
By Lemma \ref{polylem}, these facts together imply that $\deg\left(  h\right)
=\Omega\left(  n/c\right)  =\Omega\left(  n/k\right)  $.

Clearly $\deg\left(  g\right)  \leq\deg\left(  p\right)  $ and $\deg\left(
h\right)  \leq\deg\left(  p\right)  $. \ So combining,%
\[
\deg\left(  p\right)  =\Omega\left(  \max\left\{  \sqrt{k},\frac{n}%
{k}\right\}  \right)  =\Omega\left(  n^{1/3}\right)  .
\]

\end{proof}

\section{Oracle Separations\label{ORACLE}}

Using Theorem \ref{mainthm}, we can exhibit an oracle separation between
$\mathsf{SZK}$\ and $\mathsf{QMA}$, thereby answering the author's question
from eight years ago.

\begin{theorem}
\label{szkqmasep}There exists an oracle $A$\ such that $\mathsf{SZK}%
^{A}\not \subset \mathsf{QMA}^{A}$.
\end{theorem}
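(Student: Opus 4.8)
The goal is to promote the query-complexity lower bound of Theorem~\ref{mainthm} into an oracle separation $\mathsf{SZK}^A \not\subset \mathsf{QMA}^A$. The plan is to use the standard diagonalization machinery that converts a query lower bound for a promise problem (here PTP) into an oracle separation, combined with the two facts already established: that PTP lives in $\mathsf{SZK}$ (Proposition~\ref{szkprop}), and that any $\mathsf{QMA}$ verifier for PTP with witness length $w$ and query count $T$ must satisfy $Tw = \Omega(n^{1/3})$.

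First I would set up the encoding. For each input length parameter $n$, I reserve a block of oracle bits that encodes a function $f_n : [n] \to [n]$, and I define a language $L_A$ whose membership at the relevant input lengths is determined by whether $f_n$ is a permutation or is $n/8$-far from one. By Proposition~\ref{szkprop} this language is in $\mathsf{SZK}^A$ uniformly over all choices of $A$: the honest-verifier protocol there uses only $O(1)$ queries and $O(\log n)$ time, and nothing in that protocol depends on $A$ beyond black-box access to $f_n$, so the $\mathsf{SZK}$ membership is automatic. The entire burden therefore falls on ruling out $\mathsf{QMA}^A$.

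Second, I would diagonalize against all $\mathsf{QMA}$ machines. Enumerate the candidate $\mathsf{QMA}$ verifiers $M_1, M_2, \ldots$; a machine $M_i$ running in time $t(n) = \mathrm{poly}(n)$ makes at most $T = \mathrm{poly}(n)$ queries and uses a witness of at most $w = \mathrm{poly}(n)$ qubits. Because $T$ and $w$ are both polylogarithmic relative to $n^{1/3}$ being superpolynomial in the \emph{relevant} regime --- more precisely, because a polynomial-time $\mathsf{QMA}$ machine has $Tw = \mathrm{poly}(\log N)$ when we take the instance size to be $N \gg n$, so that $Tw = o(n^{1/3})$ --- Theorem~\ref{mainthm} guarantees that for a suitable choice of $n$ (growing fast enough with $i$), no such verifier can correctly decide PTP on functions of domain size $n$. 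Concretely, for each $i$ I pick a fresh, large $n_i$ (spacing the blocks so the machines cannot read each other's blocks within their time bounds) and, invoking the contrapositive of Theorem~\ref{mainthm}, fix the bits of $f_{n_i}$ so that $M_i$ fails: there is no witness making $M_i$ accept a permutation with high probability while rejecting all far-from-permutation functions with high probability, so I can plant either a permutation or a far-from-permutation instance on which $M_i$ gives the wrong answer (or lacks the required completeness--soundness gap). Standard spacing arguments ensure these finitely-many fixed bits for distinct $i$ do not interfere.

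The main obstacle is the quantifier order in the diagonalization: a $\mathsf{QMA}$ verifier gets to choose its witness \emph{after} seeing the oracle, so I cannot simply fix one bad function for $M_i$ and be done --- I must use the full strength of Theorem~\ref{mainthm}, which already quantifies over all possible witnesses (indeed, the Guessing Lemma eliminated the witness entirely, reducing to a witnessless algorithm with a $1/2^w$ acceptance gap). Thus the theorem hands me exactly the right statement: \emph{for every} verifier with $Tw = o(n^{1/3})$ there exists an input instance on which it fails, with the failure being robust to the prover's best witness. I would combine this with a counting/union-bound over the finitely many machines active at each scale and a careful choice of the growth rate of $n_i$ versus the time bound $t_i(N)$, so that at the chosen input length the polynomial query budget is genuinely below the $n^{1/3}$ threshold. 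Assembling the blocks into a single oracle $A$ and verifying that every $M_i$ fails at $n_i$ while $L_A \in \mathsf{SZK}^A$ throughout then yields $\mathsf{SZK}^A \not\subset \mathsf{QMA}^A$.
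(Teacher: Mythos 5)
Your proposal is correct and follows essentially the same route as the paper's (deliberately brief) proof sketch: encode an infinite family of PTP instances in the oracle, define the unary language $L_A$, get $\mathsf{SZK}^A$ membership for free from Proposition~\ref{szkprop}, and diagonalize against all $\mathsf{QMA}$ verifiers using Theorem~\ref{mainthm}, whose quantification over all witnesses is exactly what makes the diagonalization sound. One presentational nit: the paper takes $f_n:[2^n]\rightarrow[2^n]$ so that a $\mathrm{poly}(n)$-time verifier automatically has $Tw=\mathrm{polylog}(2^n)=o(2^{n/3})$; your initial $f_n:[n]\rightarrow[n]$ would not give this, though your later remark about taking the instance size superpolynomially larger than the input length repairs it to match the paper's scaling.
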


\begin{proof}
[Proof Sketch]The oracle $A$ will encode an infinite sequence of instances
$f_{n}:\left[  2^{n}\right]  \rightarrow\left[  2^{n}\right]  $\ of the
Permutation Testing Problem, one for each input length $n$. \ Define a unary
language $L_{A}$\ by $0^{n}\in L_{A}$\ if $f_{n}$\ is a permutation, and
$0^{n}\notin L_{A}$\ if $f_{n}$\ is far from a permutation. \ Then Proposition
\ref{szkprop}\ tells us that $L_{A}\in\mathsf{SZK}^{A}$\ for all $A$. \ On the
other hand, Theorem \ref{mainthm}\ tells us that we can choose $A$ in such a
way that $L_{A}\notin\mathsf{QMA}^{A}$, by diagonalizing against all possible
$\mathsf{QMA}$\ verifiers.
\end{proof}

In the rest of the section, we explain how our lower bound actually places
$\mathsf{SZK}$\ outside of a larger complexity class than $\mathsf{QMA}$.
\ First let us define the larger class in question.

\begin{definition}
[Vyalyi \cite{vyalyi}]$\mathsf{A}_{\mathsf{0}}\mathsf{PP}$ is the class of
languages $L$\ for which there exists a $\mathsf{\#P}$\ function $g$, as well
as polynomials $p$\ and $q$, such that for all inputs $x\in\left\{
0,1\right\}  ^{n}$:

\begin{enumerate}
\item[(i)] If $x\in L$\ then $\left\vert g\left(  x\right)  -2^{p\left(
n\right)  }\right\vert \geq2^{q\left(  n\right)  }$.

\item[(ii)] If $x\notin L$\ then $\left\vert g\left(  x\right)  -2^{p\left(
n\right)  }\right\vert \leq2^{q\left(  n\right)  -1}$.
\end{enumerate}
\end{definition}

We now make some elementary observations about $\mathsf{A}_{\mathsf{0}%
}\mathsf{PP}$. \ First, $\mathsf{A}_{\mathsf{0}}\mathsf{PP}$\ is contained in
$\mathsf{PP}$, and contains not only $\mathsf{MA}$\ but also the
slightly-larger class $\mathsf{SBP}$ (Small Bounded-Error Polynomial-Time)
defined by B\"{o}hler et al.\ \cite{bohler}. \ Second, it is not hard to show
that $\mathsf{P}^{\mathsf{P{}romiseA}_{\mathsf{0}}\mathsf{PP}}=\mathsf{P}%
^{\mathsf{PP}}=\mathsf{P}^{\mathsf{\#P}}$. \ The reason is that, by varying
the polynomial $p$, we can obtain a multiplicative estimate of the difference
$\left\vert g\left(  x\right)  -2^{p\left(  n\right)  }\right\vert $, which
then implies that we can use binary search to determine $g\left(  x\right)  $\ itself.

By adapting the result of Aaronson \cite{aar:pp}\ that $\mathsf{PP}%
=\mathsf{PostBQP}$, Kuperberg \cite{kuperberg:jones} gave a beautiful
alternate characterization of $\mathsf{A}_{\mathsf{0}}\mathsf{PP}$ in terms of
quantum computation. \ Let $\mathsf{SBQP}$\ (Small Bounded-Error Quantum
Polynomial-Time) be the class of languages $L$ for which there exists a
polynomial-time quantum algorithm\ that accepts with probability at least
$2^{-p\left(  n\right)  }$\ if $x\in L$, and with probability at most
$2^{-p\left(  n\right)  -1}$\ if $x\notin L$, for some polynomial $p$.

\begin{theorem}
[Kuperberg \cite{kuperberg:jones}]\label{kuperthm}$\mathsf{A}_{\mathsf{0}%
}\mathsf{PP}=\mathsf{SBQP}.$
\end{theorem}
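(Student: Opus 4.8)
The plan is to prove both inclusions by routing them through a single cleaner normal form for $\mathsf{A}_{\mathsf{0}}\mathsf{PP}$, and then invoking the standard dictionary between $\mathsf{GapP}$ functions and amplitudes of polynomial-size quantum circuits---exactly the dictionary underlying Aaronson's theorem $\mathsf{PP}=\mathsf{PostBQP}$ \cite{aar:pp}. Recall that a $\mathsf{GapP}$ function is the difference of two $\mathsf{\#P}$ functions, and that $\mathsf{GapP}$ is closed under sums and products. The normal form I would establish first is: $L\in\mathsf{A}_{\mathsf{0}}\mathsf{PP}$ if and only if there is a $\mathsf{GapP}$ function $D$ and a polynomial $q$ with $\left\vert D(x)\right\vert \geq 2^{q(n)}$ when $x\in L$ and $\left\vert D(x)\right\vert \leq 2^{q(n)-1}$ when $x\notin L$; that is, the distinguished center $2^{p(n)}$ can be replaced by $0$ at the cost of working with a signed $\mathsf{GapP}$ function instead of a $\mathsf{\#P}$ function $g$. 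One direction is immediate ($D:=g-2^{p}$). For the converse I would write $D=g_{1}-g_{2}$ with $g_{1},g_{2}\in\mathsf{\#P}$, pad the nondeterministic machine computing $g_{2}$ so that it has exactly $2^{p}$ computation paths for a suitable polynomial $p$, and set $g:=g_{1}+(2^{p}-g_{2})$; since $2^{p}-g_{2}$ counts \emph{rejecting} paths it is again a $\mathsf{\#P}$ function, and $g-2^{p}=D$, recovering the original definition.

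With this normal form in hand, the inclusion $\mathsf{SBQP}\subseteq\mathsf{A}_{\mathsf{0}}\mathsf{PP}$ becomes essentially free. Assuming (as I may) that the $\mathsf{SBQP}$ machine is built from Hadamard and Toffoli gates, its acceptance probability on input $x$ is $\alpha(x)=G(x)/2^{T}$, where $T$ is the number of Hadamard gates and $G(x)=\sum_{z\text{ accepting}}N_{z}(x)^{2}$ with each amplitude numerator $N_{z}\in\mathsf{GapP}$. Closure of $\mathsf{GapP}$ under products and sums makes $G$ a $\mathsf{GapP}$ function, and $G\geq 0$ by construction. The $\mathsf{SBQP}$ promise---$\alpha\geq 2^{-p}$ on yes-instances and $\alpha\leq 2^{-p-1}$ on no-instances---translates directly into $G\geq 2^{T-p}$ versus $G\leq 2^{T-p-1}$, and since $G$ is nonnegative this is exactly a gap on $\left\vert G\right\vert$. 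Taking $D:=G$ and $q:=T-p$ in the normal form finishes this direction; the one-sidedness of the $\mathsf{SBQP}$ threshold causes no trouble precisely because the normal form is centered at $0$.

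The substantive direction is $\mathsf{A}_{\mathsf{0}}\mathsf{PP}\subseteq\mathsf{SBQP}$, and here I would use the harder half of the dictionary: for any $\mathsf{GapP}$ function $D$ there is a polynomial-size quantum circuit $U_{x}$ (over Hadamard and Toffoli) and a designated basis state $\left\vert \phi\right\rangle$ with $\left\langle \phi\right\vert U_{x}\left\vert 0\right\rangle =D(x)/2^{m(n)}$ for some polynomial $m$. This is the amplitude-encoding construction at the heart of $\mathsf{PP}\subseteq\mathsf{PostBQP}$: create the uniform superposition over the $2^{m}$ paths of the machine with Hadamards, attach each path's sign as a $\pm 1$ phase, and arrange that the paths coherently collect at $\left\vert \phi\right\rangle$. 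Given $D$ from the normal form, I would let the $\mathsf{SBQP}$ machine simply measure and accept iff it observes $\left\vert \phi\right\rangle$. Its acceptance probability is then
\[
\left\vert \left\langle \phi\right\vert U_{x}\left\vert 0\right\rangle \right\vert ^{2}=\frac{\left\vert D(x)\right\vert ^{2}}{2^{2m}},
\]
so (choosing $m>q$, as we may) yes-instances accept with probability at least $2^{2q-2m}$ and no-instances with probability at most $2^{2q-2m-2}$. Setting $r:=2m-2q$ gives acceptance probability $\geq 2^{-r}$ versus $\leq 2^{-r-2}\leq 2^{-r-1}$, which is precisely the $\mathsf{SBQP}$ promise. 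Note that the squaring inherent in ``probability $=$ amplitude$^{2}$'' is what turns the factor-$2$ gap on $\left\vert D\right\vert$ into a factor-$4$ gap in probability, comfortably more than $\mathsf{SBQP}$ demands, so no separate amplification step is needed in either direction.

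I expect the main obstacle to be this amplitude-encoding lemma, together with the accompanying gate-set bookkeeping. Because $\mathsf{SBQP}$ acceptance probabilities are exponentially small, I must represent them \emph{exactly} rather than approximately, which is why I would fix the Hadamard--Toffoli gate set (real dyadic amplitudes, signs carried as $\pm 1$ phases) and verify that $\mathsf{SBQP}$ is insensitive to this choice. Everything else---the normal-form manipulation and the $\mathsf{SBQP}\subseteq\mathsf{A}_{\mathsf{0}}\mathsf{PP}$ direction---is routine $\mathsf{\#P}/\mathsf{GapP}$ closure algebra once the dictionary is granted.
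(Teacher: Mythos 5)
The paper states Theorem \ref{kuperthm} without proof, importing it from Kuperberg \cite{kuperberg:jones} as an adaptation of Aaronson's $\mathsf{PP}=\mathsf{PostBQP}$ \cite{aar:pp}; your reconstruction---the $\mathsf{GapP}$ normal form centered at $0$, the path-sum simulation giving $\mathsf{SBQP}\subseteq\mathsf{A}_{\mathsf{0}}\mathsf{PP}$, and the Hadamard--phase--Hadamard amplitude encoding $\left\langle \phi\right\vert U_{x}\left\vert 0\right\rangle = D(x)/2^{m}$ giving the converse---is correct and is essentially the argument the paper is pointing to. Your observation that squaring amplitudes turns the factor-$2$ gap on $\left\vert D\right\vert$ into a factor-$4$ gap in probability, eliminating any need for amplification, is exactly right, and the details you defer (Shi's Hadamard--Toffoli universality, $\mathsf{GapP}$ closure under exponential sums and squares, robustness of $\mathsf{SBQP}$ under gate-set change via Solovay--Kitaev with exponentially small error) are standard and fillable as you describe.
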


By combining Theorem \ref{kuperthm} with Lemma \ref{inplace}, it is not hard
to reprove the following result of Vyalyi \cite{vyalyi}.

\begin{theorem}
[Vyalyi \cite{vyalyi}]\label{vyalyithm}$\mathsf{QMA}\subseteq\mathsf{A}%
_{\mathsf{0}}\mathsf{PP}$.
\end{theorem}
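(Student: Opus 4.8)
The plan is to route everything through Kuperberg's characterization (Theorem \ref{kuperthm}): since $\mathsf{A}_{\mathsf{0}}\mathsf{PP}=\mathsf{SBQP}$, it suffices to show $\mathsf{QMA}\subseteq\mathsf{SBQP}$. So let $L\in\mathsf{QMA}$ via a verifier $V$ that runs in polynomial time, uses a $w=\operatorname{poly}(n)$-qubit witness, and has completeness and soundness errors $1/3$. The whole construction is just the polynomial-time analogue of the Guessing Lemma (Lemma \ref{guesslem}): I eliminate the witness by feeding in the maximally mixed state, producing a witnessless algorithm whose acceptance probability sits just above $2^{-w}$ on yes-instances and far below it on no-instances---exactly the kind of exponentially small multiplicative gap that $\mathsf{SBQP}$ is designed to detect.

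In more detail, I would first apply the in-place amplification of Marriott and Watrous (Lemma \ref{inplace}) with $s:=w+2$, obtaining an amplified verifier $V'$ that still uses only a $w$-qubit witness, runs in polynomial time, and has completeness and soundness errors at most $2^{-s}$. I would then run $V'$ on the $w$-qubit maximally mixed state $I_w$ in place of the witness. Using $I_w\succeq 2^{-w}\lvert\varphi\rangle\langle\varphi\rvert$ for every pure state $\lvert\varphi\rangle$, together with the fact that acceptance probability has the form $\operatorname{Tr}[\Pi\cdot]$ for a fixed positive operator $\Pi$, the identical calculation to Lemma \ref{guesslem} yields
\[
\Pr[V'(x,I_w)\text{ accepts}]\geq\frac{0.9}{2^{w}}\ \ (x\in L),\qquad \Pr[V'(x,I_w)\text{ accepts}]\leq\frac{0.3}{2^{w}}\ \ (x\notin L).
\]
This is a polynomial-time quantum algorithm, with no witness, exhibiting a multiplicative gap of $3$ around the threshold $2^{-w}$.

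Finally I would calibrate these ``soft'' thresholds to the exact powers of two demanded by the definition of $\mathsf{SBQP}$. Setting the target polynomial to $p(n):=w+1$, I would precede the algorithm by a one-qubit rotation that makes it abort and reject with a suitable probability $1-\alpha$, so that its acceptance probability is uniformly scaled by a constant $\alpha\in(0,1]$. Since scaling preserves the factor-$3$ gap, one can choose $\alpha$ (to exponential precision, which is all the quantum model needs) so that the yes-probability is at least $2^{-p(n)}$ while the no-probability is at most $2^{-p(n)-1}$. This exhibits $L\in\mathsf{SBQP}=\mathsf{A}_{\mathsf{0}}\mathsf{PP}$, completing the proof.

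The one step that needs genuine (if small) care is this last calibration: the guessing argument only delivers thresholds of the shape $c/2^{w}$ for constants $c$, whereas $\mathsf{SBQP}$ insists on clean thresholds $2^{-p}$ and $2^{-p-1}$. I expect the main subtlety to be verifying that a single scaling factor $\alpha$ can simultaneously raise the yes-threshold to $2^{-p}$ and hold the no-threshold below $2^{-p-1}$---which works precisely because the constant gap of $3$ strictly exceeds the required gap of $2$, leaving a nonempty admissible range for $\alpha$. Everything else is a direct transcription of the Guessing Lemma argument into the time-bounded setting, followed by an appeal to Kuperberg's theorem.
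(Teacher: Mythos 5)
Your proof is correct and takes essentially the same route as the paper's: Marriott--Watrous amplification (Lemma \ref{inplace}) with exponentially small error, replacing the witness by the maximally mixed state exactly as in Lemma \ref{guesslem}, and concluding via Kuperberg's characterization (Theorem \ref{kuperthm}). Your final calibration step, while correct, is actually unnecessary: with $s=w+2$ the unscaled algorithm already meets the $\mathsf{SBQP}$ thresholds for $p(n)=w+1$, since $0.9/2^{w}\geq 2^{-(w+1)}$ on yes-instances and $2^{-(w+2)}=2^{-p(n)-1}$ on no-instances.
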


\begin{proof}
Similar to Lemma \ref{guesslem}. \ Given a language $L$, suppose $L$ has a
$\mathsf{QMA}$\ verifier $V$\ that takes a $w$-qubit quantum witness. \ Then
first apply Marriott-Watrous amplification (Lemma \ref{inplace}), to obtain a
new verifier $V^{\prime}$\ with completeness and soundness errors $0.2/2^{w}$,
which also takes a $w$-qubit quantum witness. \ Next, run $V^{\prime}$ with
the $w$-qubit maximally mixed state $I_{w}$ in place of the witness. \ The
result is a quantum algorithm that accepts every $x\in L$ with probability at
least $0.9/2^{w}$, and accepts every $x\notin L$ with probability at most
$0.2/2^{w}$. \ This implies that $L\in\mathsf{SBQP}$.
\end{proof}

We now observe that our results from Section \ref{MAIN} yield, not only an
oracle $A$ such that $\mathsf{SZK}^{A}\not \subset \mathsf{QMA}^{A}$, but an
oracle $A$ such that $\mathsf{SZK}^{A}\not \subset \mathsf{A}_{\mathsf{0}%
}\mathsf{PP}^{A}$, which is a stronger separation.

\begin{theorem}
\label{szka0pp}There exists an oracle $A$\ such that $\mathsf{SZK}%
^{A}\not \subset \mathsf{A}_{\mathsf{0}}\mathsf{PP}^{A}$.
\end{theorem}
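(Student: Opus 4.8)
The plan is to mirror the structure already used in Theorem~\ref{szkqmasep}, but to diagonalize against the larger class $\mathsf{A}_{\mathsf{0}}\mathsf{PP}^{A}$ rather than $\mathsf{QMA}^{A}$. As before, the oracle $A$ will encode a sequence of PTP instances $f_n:\left[2^n\right]\rightarrow\left[2^n\right]$, one per input length, and the unary language $L_A$ will record, for each $n$, whether $f_n$ is a permutation. Proposition~\ref{szkprop} already gives $L_A\in\mathsf{SZK}^A$ for every choice of $A$, so the entire content lies in showing that $A$ can be chosen so that $L_A\notin\mathsf{A}_{\mathsf{0}}\mathsf{PP}^A$.

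The key step is to convert the definition of $\mathsf{A}_{\mathsf{0}}\mathsf{PP}$ into a query lower bound that our polynomial argument can attack. Here I would invoke Kuperberg's characterization (Theorem~\ref{kuperthm}), $\mathsf{A}_{\mathsf{0}}\mathsf{PP}=\mathsf{SBQP}$: a purported $\mathsf{A}_{\mathsf{0}}\mathsf{PP}^A$ algorithm for $L_A$ is, up to this equivalence, an $\mathsf{SBQP}$ algorithm, hence a polynomial-time quantum algorithm making polynomially many queries to $f_n$ that accepts with probability at least $2^{-p(n)}$ when $f_n$ is a permutation and at most $2^{-p(n)-1}$ when $f_n$ is far from one. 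The crucial observation is that this is \emph{exactly} the kind of small-acceptance-probability algorithm produced by the Guessing Lemma (Lemma~\ref{guesslem}): setting $\varepsilon:=2^{-p(n)-1}$, we obtain a query algorithm $Q$ with $\Pr[Q^{f}\text{ accepts}]\geq2\varepsilon$ on permutations and $\leq\varepsilon$ on far-from-permutations. This is essentially the same separation (up to constants) between the completeness and soundness thresholds that drives the proof of Theorem~\ref{mainthm}.

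I would then re-run the polynomial-method argument of Theorem~\ref{mainthm} verbatim, noting that it never used any feature of $\varepsilon$ beyond the constant-factor gap between the two acceptance thresholds. Applying Lemma~\ref{kutinlem} gives a polynomial $p(m,a)$ of degree at most $2T$ agreeing with the expected acceptance probability on $S_{m,a}$; properties (i') and (ii') hold with the rescaled $\varepsilon$; and the two-stage application of Paturi's Lemma~\ref{polylem} (first along $g(x):=p(n/2,2x)$ to get $\deg\gtrsim\sqrt{k}$, then along $h(i):=p(n-ci,c)$ to get $\deg\gtrsim n/k$) forces $\deg(p)=\Omega(n^{1/3})$, hence $T=\Omega(n^{1/3})$ for the query count. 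Since an $\mathsf{SBQP}$ (equivalently $\mathsf{A}_{\mathsf{0}}\mathsf{PP}$) algorithm runs in polynomial time and therefore makes only $\operatorname*{polylog}(2^n)=\operatorname{poly}(n)$ queries, taking $n$ large enough contradicts the lower bound. A standard diagonalization over all $\mathsf{A}_{\mathsf{0}}\mathsf{PP}^A$ machines (there are countably many, and each is defeated at some finite input length by planting a suitable hard $f_n$) then yields a single oracle $A$ with $L_A\notin\mathsf{A}_{\mathsf{0}}\mathsf{PP}^A$.

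The main obstacle I anticipate is not the polynomial argument, which transfers immediately, but making the reduction to an $n$-independent acceptance gap fully rigorous through the relativized Kuperberg characterization. In particular one must check that Theorem~\ref{kuperthm} relativizes and that the $\mathsf{SBQP}^A$ algorithm's polynomial $p(n)$ is uniformly bounded across the diagonalization, so that $\varepsilon=2^{-p(n)-1}$ still leaves a constant multiplicative gap to feed into properties (i') and (ii'); the $\mathsf{\#P}$-based definition of $\mathsf{A}_{\mathsf{0}}\mathsf{PP}$ could alternatively be used directly, expressing $g$ as an exponential sum over oracle branches and arguing that the normalized difference $\left\lvert g-2^{p}\right\rvert/2^{p}$ is itself a bounded-degree polynomial in $(m,a)$, but routing through $\mathsf{SBQP}$ keeps the query-complexity bookkeeping cleanest.
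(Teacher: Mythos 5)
Your proposal is correct and follows essentially the same route as the paper's own (sketched) proof: encode PTP instances in the oracle, invoke Kuperberg's characterization $\mathsf{A}_{\mathsf{0}}\mathsf{PP}=\mathsf{SBQP}$ (Theorem~\ref{kuperthm}), observe that the polynomial-method argument of Theorem~\ref{mainthm} depends only on the constant-factor gap between the two acceptance thresholds (indeed the degree bound $\Omega(n^{1/3})$ is independent of $\varepsilon$, which disposes of your worry about the machine's polynomial $p(n)$), and diagonalize against the countably many $\mathsf{SBQP}$ machines. Your added care about relativization and rescaled constants only makes explicit what the paper's brief sketch leaves implicit.
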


\begin{proof}
[Proof Sketch]As in Theorem \ref{szkqmasep}, the oracle $A$ encodes an
infinite sequence of instances $f_{n}:\left[  2^{n}\right]  \rightarrow\left[
2^{n}\right]  $\ of the Permutation Testing Problem. \ The key observation is
that Theorem \ref{mainthm} rules out, not merely any $\mathsf{QMA}$\ protocol
for PTP, but also any $\mathsf{SBQP}$\ algorithm: that is, any polynomial-time
quantum algorithm that accepts with probability at least $2\varepsilon$\ if
$f_{n}$\ is a permutation, and with probability at most $\varepsilon$\ if
$f_{n}$\ is far from a permutation, for some $\varepsilon>0$. \ This means
that we can use Theorem \ref{mainthm}\ to diagonalize against $\mathsf{SBQP}%
$\ (or equivalently $\mathsf{A}_{\mathsf{0}}\mathsf{PP}$) machines.
\end{proof}

\section{Open Problems\label{OPEN}}

\begin{enumerate}
\item[(1)] It is strange that our lower bound works only for the Permutation
Testing Problem, and not for the original collision problem (i.e., for
certifying that $f$ is one-to-one rather than two-to-one). \ Can we rule out
succinct quantum proofs for the latter?

\item[(2)] Even for PTP, there remains a large gap between the upper and lower
bounds that we can prove on $\mathsf{QMA}$\ query complexity. \ Recall that
our lower bound has the form $Tw=\Omega\left(  n^{1/3}\right)  $, where
$T$\ is the query complexity and $w$ is the number of qubits in the witness.
\ By contrast, if $w=o\left(  n\log n\right)  $, then we do not know of
\textit{any} $\mathsf{QMA}$\ protocol that achieves $T=o\left(  n^{1/3}%
\right)  $---i.e., that does better than simply ignoring the witness and
running the Brassard-H\o yer-Tapp algorithm. \ It would be extremely
interesting to get sharper results on the tradeoff between $T$ and $w$. \ (As
far as we know, it is open even to get a sharp tradeoff for \textit{classical}
$\mathsf{MA}$\ protocols.)

\item[(3)] For the collision problem, the PTP, or any other black-box problem,
is there a gap (even just a polynomial gap) between the $\mathsf{QMA}$\ query
complexity and the $\mathsf{QCMA}$\ query complexity? \ This seems like a
difficult question, since currently, the one lower bound technique that we
have for\ $\mathsf{QCMA}$---namely, the reduction to $\mathsf{SBQP}%
$\ exploited in this paper---\textit{also} works for $\mathsf{QMA}$. \ It
follows that a new technique will be needed to solve the old open problem of
constructing an oracle $A$\ such that $\mathsf{QCMA}^{A}\neq\mathsf{QMA}^{A}$.
\ (Currently, the closest we have is a \textit{quantum} oracle separation
between $\mathsf{QMA}$\ and $\mathsf{QCMA}$, shown by Aaronson and Kuperberg
\cite{ak}.)

\item[(4)] Watrous (personal communication) asked whether there exists an
oracle $A$ such that $\mathsf{SZK}^{A}\not \subset \mathsf{PP}^{A}$. \ Since
$\mathsf{PP}\subseteq\mathsf{P}^{\mathsf{P{}romiseA}_{\mathsf{0}}\mathsf{PP}}%
$, our oracle separation between\ $\mathsf{SZK}$ and $\mathsf{A}_{\mathsf{0}%
}\mathsf{PP}$ comes \textquotedblleft close\textquotedblright\ to answering
Watrous's question. \ However, a new technique seems needed to get from
$\mathsf{A}_{\mathsf{0}}\mathsf{PP}$\ to $\mathsf{PP}$.
\end{enumerate}

\bibliographystyle{plain}
\bibliography{thesis}

\end{document}